\begin{document}
\title{Finding minimum locating arrays using a CSP solver}
\author{Tatsuya Konishi\\
Graduate School of Information Science \\
and Technology, Osaka University
\and Hideharu Kojima\\
Graduate School of Information Science \\
and Technology, Osaka University
\and Hiroyuki Nakagawa\\
Graduate School of Information Science \\
and Technology, Osaka University
\and Tatsuhiro Tsuchiya\\
Graduate School of Information Science \\
and Technology, Osaka University}
\maketitle
\address{T. Tsuchiya, Graduate School of Information Science and Technology, Osaka University, Suita, Japan}

\setcounter{page}{1}
\issue{XXI~(2019)}
\runninghead{T. Konishi et al.}{Finding minimum locating arrays using a CSP solver}

\begin{abstract}
Combinatorial interaction testing is an efficient software testing
strategy. If all interactions among test
parameters or factors needed to be covered,
the size of a required test suite would be prohibitively large. In
contrast, this strategy only requires covering $t$-wise interactions
where $t$ is typically very small. As a result, it becomes possible to
significantly reduce test suite size.  Locating arrays aim to
enhance the ability of combinatorial interaction testing. In particular,
$(\overline{1}, t)$-locating arrays can not only execute all $t$-way
interactions but also identify, if any,  which of the interactions
causes a failure. In spite of this useful property,
there is only limited research either on how to generate locating arrays or on
their minimum sizes. In this paper, we propose an approach to generating
minimum locating arrays. In the approach, the problem of finding a
locating array consisting of $N$ tests is represented as a
Constraint Satisfaction Problem (CSP) instance, which is in turn
solved by a modern CSP solver.
The results of using the proposed approach reveal
many $(\overline{1}, t)$-locating arrays that are smallest known so far.
In addition, some of these arrays are proved to be minimum.
\end{abstract}

\begin{keywords}
Software testing, combinatorial interaction testing, locating array, constraint satisfaction problem
\end{keywords}

\section{Introduction}	

Combinatorial interaction testing is a testing strategy aimed to achieve high fault detection capability with a small number of tests~\cite{kuhn_introductionCT2013,nie_surveyCT2011}.
The basic form of this strategy is $t$-way testing, which requires testing all combinations of values on any $t$ factors.
The number of such combinations can be rather large; but the number of tests required to exercise these combinations is usually small enough to be executed in reasonable time
when $t$ is small, such as two or three.
In addition, it is believed that many of software defects involve a combination of a few factors~\cite{kuhn_introductionCT2013,kuhn_faultinteractions2004}.
The parameter $t$ is referred to as \emph{strength}.

For example, consider the model of a System Under Test (SUT) shown in Fig.~\ref{fig:printermodel}.
Here the SUT is a printer system.
The model consists of four factors, including layout, size, color, and duplex.
All the factors have two possible values.
Figure~\ref{fig:CA} is a test suite for the printer model, where
each row represents a test.

Such a test suite can be viewed as an $N \times k$ array where
$N$ is the number of tests and $k$ is the number of factors.
An array representing a test suite for the $t$-way testing strategy is called
a \emph{covering array} of strength $t$ or
a $t$-\emph{covering array}.
Figure~\ref{fig:CA} is a covering array of strength~2, where
all interactions of strength~2, i.e., interactions
involving two factors occur in at least one row.
Since it is impossible to cover all such interactions using only four rows,
this covering array is \emph{minimum} in that there is no smaller covering array
of the same strength.


\begin{figure}[tb]
\centering
\begin{tabular}{|c|c|c|c|} \hline
  Layout      & Size & Color & Duplex    \\ \hline \hline
  Portrait    & A4 & Yes   & On     \\
  Landscape & A5 & No    & Off \\ \hline
\end{tabular}
\caption{Model of a printer system}\label{fig:printermodel}
\end{figure}

\begin{figure}[tb]
\centering
\begin{tabular}{|c|c|c|c|c|} \hline
    & Layout   &  Size & Color & Duplex \\ \hline \hline
  1  & Portrait  & A4 & Yes & On   \\
  2  & Portrait  & A4 & No  & Off  \\
  3  & Portrait  & A5 & No  & On   \\
  4  & Landscape & A4 & No  & On   \\
  5  & Landscape & A5 & Yes & Off  \\ \hline
\end{tabular}
\caption{Covering array of strength~2}\label{fig:CA}
\end{figure}

Using a $t$-covering array as a test suite allows us to detect the presence of
a failure-triggering interaction of strength~$t$. However, even when a failure occurs,
it is not always possible to locate which interaction causes the failure.
For example, suppose that all the tests in Figure~\ref{fig:CA} were executed
and that all tests were passed except that the third test failed.
The failed test contains six two-way interactions, namely,
(Portrait, A5), (Portrait, No), (Portrait, On), (A5, No), (A5, On) and (No, On).
Three of the six interactions, namely, (Portrait, No), (Portrait, On) and (No, On), can be safely
excluded from the candidates since they occur in the passed tests.
However, it is impossible to decide
which of the remaining three interactions is failure-triggering.

\emph{Locating arrays} add to covering arrays the ability of locating
failure-triggering interactions~\cite{colbourn_locatingarray2008}.
An array is $(\overline{d}, t)$-locating if it can locate all
failure-triggering interactions
as long as the total number of these interactions is at most $d$
and their strength is $t$.

Figure~\ref{fig:LA} shows a $(\overline{1}, 2)$-locating array for the SUT model in Figure~\ref{fig:printermodel}.
Using this array enables to identify a failure-triggering interaction of strength~2, provided
that there is at most one such interaction.
For example, suppose that as a result of executing all the tests in Figure~\ref{fig:LA},
the fourth and fifth test have failed, and the other tests have passed.
In this case, it can be safely concluded that (No, On) is the failure-triggering
interaction, because this is the only interaction that occurs in these
two failed tests but not in the other ones.



\begin{figure}[tb]
\centering
\begin{tabular}{|c|c|c|c|c|} \hline
    & Layout   &  Size & Color & Duplex \\ \hline \hline
  1  & Portrait  & A4 & Yes & On  \\
  2  & Portrait  & A4 & No  & Off \\
  3  & Portrait  & A5 & Yes & Off \\
  4  & Portrait  & A5 & No  & On  \\
  5  & Landscape & A4 & No  & On  \\
  6  & Landscape & A5 & Yes & On  \\
  7  & Landscape & A5 & No  & Off \\ \hline
\end{tabular}
\caption{(${\overline 1},2)$-Locating array}\label{fig:LA}
\end{figure}

At the cost of the failure localizing ability, the size of locating arrays is usually
substantially larger than covering arrays.
For example, compare the $(\overline{1}, 2)$-locating array
shown in Figure~\ref{fig:LA}
and the covering array shown in Figure~\ref{fig:CA}.
As explained later, the locating array is minimum in size but contains four more rows.
Since the number of rows (tests) greatly affects the cost of testing,
techniques for finding small, especially minimum locating arrays are required
to use locating arrays as a practical testing tool.
Also the minium sizes of locating arrays are of theoretical interest of its own right.

So far little is known about locating arrays.
For example, minimum locating arrays are known for a few sporadic cases.
To our knowledge all previous studies on the smallest size of locating arrays
are based solely on mathematical arguments.
In contrast, we propose a computational approach to generation of
minimum locating arrays in this paper.
We formulate the problem of finding a locating array of a given size
as the Constraint Satisfaction Problem (CSP),
so that we can make full use of recent advances in CSP solving.
Locating arrays of minimum size can be obtained by repeatedly solving
the problem of varying sizes.

The rest of the paper is structured as follows.
Section~\ref{basicconcept} provides the definitions of basic concepts,
such as locating arrays and covering arrays.
Section~\ref{SAT} explains the proposed approach to finding minimum locating arrays.
Section~\ref{experiment} shows the results of using the proposed approach.
Section~\ref{relatedresearch} describes related research.
Section~\ref{conclusion} concludes this paper.

\section{Preliminaries}\label{basicconcept}

The System Under Test (SUT) is modeled as two positive integer parameters: $k$ and $v$.
The SUT has $k$ \emph{factors}, $F_1$, ..., $F_k$.
The parameter $v$ represents the number of \emph{levels} that the factors can take.
A \emph{test} is a vector of size $k$ where
the $i$th element of a test represents the level of $F_i$ in that test.
For presentation simplicity, we let the domain of each factor be $\{0, ..., v-1\}$.
We view a collection of $N$ tests as an $N \times k$ array where each row is one of the $N$ tests.
Thereafter we use the term \emph{array} to mean such an array.

An \emph{interaction} $T$ is a possibly empty subset $T$ of
$\{F_1,...,F_k\}\times\{0,...,v-1\}$
such that no two elements of $T$ share the same factor
(i.e., $\forall (F, V), (F', V')\in T:  (F, V) \neq (F', V') \Rightarrow F \neq F'$).
An interaction $T$ is $t$-way or of strength~$t$ if and only if $|T| = t$.
A row (test) covers an interaction $T$ if and only if the level on $F$ of the row
is $V$ for all $(F, V) \in T$.
Given an array $A$, we let $\rho_A(T)$ denote the set of rows that cover $T$.
Similarly, for a set $\cal T$ of interactions, we let
$\rho_A({\cal T}) = \bigcup_{T\in {\cal T}} \rho_A(T)$.
Also we let ${\cal I}_t$ denote the set of interactions of strength $t$.
For example, suppose that the SUT model consists of three factors that have two levels,
i.e., $k = 3$ and $v = 2$.
Then ${\cal I}_2$ contains, for example, $\{(F_1, 0), (F_2, 1)\}$,
$\{(F_1, 1), (F_3, 1)\}$, etc.


\begin{definition}
An array $A$ is $t$-covering if and only if
\[
\forall T \in {\cal I}_t: ~ \rho_A(T) \neq \emptyset. \label{def:CA}
\]
\end{definition}
The definition~\ref{def:CA} states that all $t$-way interactions $T$ must be
covered by some row in $A$.

Colbourn and McClary introduce several version of locating arrays~\cite{colbourn_locatingarray2008}.
Of them the following two are of our interest.
\begin{definition}
\label{def:LA1}
An array $A$ is $(d, t)$-locating if and only if
\[
\forall {\cal T}_1, {\cal T}_2 \subseteq {\cal I}_t
\mathrm{\ such\ that\ } |{\cal T}_1|=|{\cal T}_2| = d:
\rho_A({\cal T}_1) = \rho_A({\cal T}_2)
 \Leftrightarrow
{\cal T}_1 = {\cal T}_2.
\]
\end{definition}

\begin{definition}
\label{def:LA2}
An array $A$ is $(\overline{d}, t)$-locating if and only if
\[
\forall {\cal T}_1, {\cal T}_2 \subseteq {\cal I}_t
\mathrm{\ such\ that\ } |{\cal T}_1|\leq d, |{\cal T}_2| \leq d:
\rho_A({\cal T}_1) = \rho_A({\cal T}_2)
 \Leftrightarrow
{\cal T}_1 = {\cal T}_2.
\]
\end{definition}

\noindent
In these definitions,
$\rho_A({\cal T}_1) = \rho_A({\cal T}_2)
 \Leftrightarrow
{\cal T}_1 = {\cal T}_2$
is equivalent to
\[
{\cal T}_1 \neq {\cal T}_2
 \Rightarrow
\rho_A({\cal T}_1) \neq \rho_A({\cal T}_2),
\]
because
${\cal T}_1 = {\cal T}_2 \Rightarrow
\rho_A({\cal T}_1) = \rho_A({\cal T}_2)$
trivially holds for any ${\cal T}_1, {\cal T}_2$ by the definition of $\rho_A(\cdot)$
and thus what effectively matters is only the other direction (i.e.,
${\cal T}_1 = {\cal T}_2 \Leftarrow
\rho_A({\cal T}_1) = \rho_A({\cal T}_2)$).

Let us assume that an interaction is either \emph{failure-triggering}
or not and that the outcome of the execution of a test is
\emph{fail} if the test covers at least one failure-triggering
interaction; \emph{pass} otherwise.
Given the test outcome of all tests,
using $(d, t)$- and $({\overline d}, t)$-locating arrays
enables to identify all existing $t$-way fault-triggering interactions.
In the practice of testing,
we are often interested in the case where the number of fault-triggering interactions
is at most one.
For this practical reason, we restrict ourselves to investigating
$({\overline 1}, t)$-locating arrays.
However, $(d, t)$-locating arrays are still useful to characterize the properties
of $({\overline 1}, t)$-locating arrays as follows.
\begin{proposition}
An array $A$ is $({\overline 1}, t)$-locating if and only if
it is $(1, t)$-locating and $t$-covering.
\end{proposition}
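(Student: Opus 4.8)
The plan is to prove both directions of the biconditional at once by reducing the $(\overline{1}, t)$-locating condition to a short, exhaustive case analysis on the sizes of the two interaction sets. As already noted in the excerpt, it suffices to use the contrapositive form of the definitions, so I want to show that $A$ is $(\overline{1}, t)$-locating exactly when, for every pair ${\cal T}_1, {\cal T}_2 \subseteq {\cal I}_t$ with $|{\cal T}_1| \leq 1$ and $|{\cal T}_2| \leq 1$, the condition ${\cal T}_1 \neq {\cal T}_2$ forces $\rho_A({\cal T}_1) \neq \rho_A({\cal T}_2)$.

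First I would observe that any subset of ${\cal I}_t$ of cardinality at most one is either the empty set or a singleton $\{T\}$ with $T \in {\cal I}_t$, and record the convention $\rho_A(\emptyset) = \emptyset$, which follows from the empty union in the definition of $\rho_A(\cdot)$. With this in hand, an unordered pair $({\cal T}_1, {\cal T}_2)$ of sets of size at most one falls into exactly three types: both empty; one empty and one a singleton; or both singletons.

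Next I would dispatch the three cases. When both sets are empty they are equal, so the premise ${\cal T}_1 \neq {\cal T}_2$ is vacuous and nothing is required. When exactly one is empty, say ${\cal T}_1 = \emptyset$ and ${\cal T}_2 = \{T\}$, the requirement $\rho_A({\cal T}_1) \neq \rho_A({\cal T}_2)$ becomes $\emptyset \neq \rho_A(T)$; quantified over all $T \in {\cal I}_t$, this is precisely the $t$-covering condition. When both are singletons, ${\cal T}_1 = \{T_1\}$ and ${\cal T}_2 = \{T_2\}$ with $T_1 \neq T_2$, the requirement reduces to $\rho_A(T_1) \neq \rho_A(T_2)$, which is exactly the (contrapositive form of the) $(1, t)$-locating condition of Definition~\ref{def:LA1}. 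Collecting these, the $(\overline{1}, t)$-locating condition holds if and only if both the $t$-covering condition and the $(1, t)$-locating condition hold, which yields the claimed equivalence directly.

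There is no deep obstacle here; the argument is essentially bookkeeping, so the work lies in getting the case split right. The one point that needs care is the empty-set case: I must use $\rho_A(\emptyset) = \emptyset$ correctly and must not overlook the empty-versus-singleton comparisons, since it is exactly these comparisons --- absent from the plain $(1,t)$-locating definition, which only compares sets of equal size one --- that contribute the $t$-covering requirement. Verifying that the three cases are exhaustive and mutually consistent then completes the proof.
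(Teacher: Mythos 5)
Your proof is correct and uses essentially the same argument as the paper: a case split on the cardinalities of ${\cal T}_1$ and ${\cal T}_2$, with the empty-versus-singleton case (via $\rho_A(\emptyset)=\emptyset$) supplying the $t$-covering condition and the singleton-versus-singleton case supplying the $(1,t)$-locating condition. The only difference is presentational --- you establish the equivalence in one pass, whereas the paper argues the two directions separately --- but the underlying case analysis is identical.
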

\begin{proof}
Proof. (if part ($\Leftarrow$))
We show that $\rho_A({\cal T}_1) \neq \rho_A({\cal T}_2)$
for any two different set of $t$-way interactions,
${\cal T}_1$ and ${\cal T}_2$, such that $0 \leq |{\cal T}_1| \leq |{\cal T}_2|\leq 1$.
1) If $A$ is $(1, t)$-locating, then
${\cal T}_1 \neq {\cal T}_2 \Rightarrow
\rho_A({\cal T}_1) \neq \rho_A({\cal T}_2)$ holds for any
${\cal T}_1, {\cal T}_2 \subseteq {\cal I}_t$ such that
$|{\cal T}_1|=|{\cal T}_2| = 1$.
2) If $A$ is $t$-covering, then $\rho_A({\cal T}_2) \neq \emptyset$
for any ${\cal T}_2 \subseteq {\cal I}_t$  such that $|{\cal T}_2| = 1$.
If ${\cal T}_1 \subseteq {\cal I}_t$ and $|{\cal T}_1| = 0$, then
${\cal T}_1 = \emptyset$.
Since $\rho_A({\cal T}_1) = \rho_A(\emptyset) = \emptyset$,
$\rho_A({\cal T}_1) \neq \rho_A({\cal T}_2)$ for any
${\cal T}_1, {\cal T}_2 \subseteq {\cal I}_t$ such that $|{\cal T}_1|=0$ and $|{\cal T}_2| = 1$.
3) If $|{\cal T}_1|=|{\cal T}_2| = 0$, then ${\cal T}_1 = {\cal T}_2 = \emptyset$.

(only if part ($\Rightarrow$) 1) If $A$ is $({\overline 1}, t)$-locating, it is trivially
$(1, t)$-locating.
2) Let ${\cal T}' = \emptyset$.
Then $\rho_A({\cal T}') = \emptyset$.
Now let $T$ be any $t$-way interaction and ${\cal T} = \{T\}$.
If $A$ is $({\overline 1}, t)$-locating, then $\rho_A({\cal T}) \neq \rho_A({\cal T}') = \emptyset$.
Hence $\rho_A(T) \neq \emptyset$, which means that $A$ is $t$-covering.
\end{proof}

In the next section, we propose encoding schemes that represent the
problem of finding locating arrays as CSP. Following the above proposition,
the constraints in our CSP formulations represent two properties: the
property of being $(1, t)$-locating and that of being $t$-covering.

\section{Finding small locating arrays with CSP solver}\label{SAT}

In this section we address the following problem:
\begin{quote}
Given the number of factors $k$, the number of levels $v$,
the number of rows $N$, find a $({\overline 1}, t)$-locating array
for these parameter values.
\end{quote}
We propose encoding schemes that represent the problem
as a Constraint Satisfaction Problem (CSP).

In Section~\ref{basicconstraint}, we describe the basic encoding scheme which provides
the baseline CSP formulation.
In Section~\ref{alternativematrix}, we explain another encoding scheme using
additional decision variables.
In Section~\ref{symmetrybreaking}, we describe a symmetry reduction technique
where we use additional constraints to reduce the solution space of the CSP.
In Section~\ref{algorithm}, we propose an algorithm that finds small locating arrays
using the CSP encoding schemes.

\subsection{Basic constraints}\label{basicconstraint}

Here we describe a basic encoding scheme. The encoding scheme uses
integer decision variables $x_{r,i}$ ($ 1 \leq r \leq N $, $ 1 \leq i \leq k $).
The range of $x_{r,i}$ is from 0 to $v-1$.
Each $x_{r,i}$ is used to represent the value in the $r$th row of
the $i$th column of an array.

The necessary and sufficient condition that the array is $({\overline 1},t)$-locating consists of two parts, namely, the part representing
that it is $t$-covering and the part representing that it is $(1, t)$-locating.

The first part of the constraints is as follows:
\begin{equation}
\begin{array}{l}
\forall l_1, ..., l_t \in \{0, 1, ..., v-1\}, \\
\forall i_1, ..., i_t~\mathrm{such\ that}~1\leq i_1 < ... < i_t \leq k:  \\
\displaystyle \qquad \qquad \qquad
\bigvee_{1\leq r \leq N}  \bigwedge_{1\leq j \leq t} (x_{r,i_j} = l_j).
\end{array}
\label{eq:constraint1}
\end{equation}

Constraint~(\ref{eq:constraint1}) states that
every $t$-way interaction $\{(F_{i_1}, l_{1}), ...,  (F_{i_t}, l_t)\}$
occurs in at least one row represented by $r$ ($1 \leq r \leq N$).

The second part of the constraints specifies that the array is $(1, t)$-locating.
This part is formed as follows.
\begin{equation}
\begin{array}{l}
\forall l_1,...,l_t, s_1,...,s_t \in \{0, 1, ..., v-1\}, \\
\forall i_1, ..., i_t, h_1,....,h_t~\mathrm{such\ that}~1 \leq i_1 < ... < i_t \leq k,
       1 \leq h_1 < ... < h_t \leq k: \\
\displaystyle \qquad \qquad \qquad
\bigvee_{1\leq r \leq N} \big\{ \bigwedge_{1\leq j \leq t} (x_{r,i_j} = l_j)
                            \oplus
                                \bigwedge_{1\leq j \leq t} (x_{r,h_j} = s_j) \big\},
\end{array}
\label{eq:constraint2}
\end{equation}
where $\oplus $ represents exclusive-or (XOR).

Constraint~(\ref{eq:constraint2}) states that
for any two $t$-way interactions, there is at least one row, denoted by $r$,
in which only either one of them is covered.
In the formula, the two interactions are
$T_1 = \{(F_{i_1}, l_1), ...., (F_{i_t}, l_t)\}$ and
$T_2 = \{(F_{h_1}, s_1), ...., (F_{h_t}, s_t)\}$.
Let ${\cal T}_1 = \{T_1\}$ and ${\cal T}_2 = \{T_2\}$.
Clearly $\exists r: (r \in \rho_A({\cal T}_1) \land r \not \in \rho_A({\cal T}_2))
\lor
(r \not \in \rho_A({\cal T}_1) \land r \in \rho_A({\cal T}_2))$
implies $\rho_A({\cal T}_1) \neq \rho_A({\cal T}_2)$ and vice versa.
Therefore, the above constraint represents the necessary and sufficient
condition that the array is $(1, t)$-locating.

Aiming at reducing the search space, we also add the following constraint that
enforces the first row to be all zeros.
\begin{equation}
\bigwedge_{1 \leq i \leq k} (x_{1, i} = 0)
\end{equation}

\subsection{Alternative Matrix}\label{alternativematrix}

Aimed at speeding up CSP solving, here we introduce an alternative
encoding.
We call the array representation based on this encoding
the alternative matrix model, as was in \cite{hnich_constraint2006}.
In the alternative matrix model,
$\genfrac(){0pt}{1}{n}{t}$ decision variables of integer type are associated with
the $\genfrac(){0pt}{1}{n}{t}$ interactions of strength~$t$ occurring in a row
in a one-by-one manner.

To show the idea, let us consider the $11 \times 10$ array with $v = 2$
shown in Figure~\ref{fig:LA2}.
Also suppose that we are interested in the case $t = 2$.
This array is represented by the alternative matrix model as shown
in Figure~\ref{fig:alternativematrix}.
The alternative matrix has a total of $\genfrac(){0pt}{1}{10}{2} = 45$ columns
each of which corresponds to a choice of $t (=2)$ different columns of
the original array.
The alternative matrix has the same number of rows as the original array.
An entry in the alternative matrix is an integer ranging from~0
to $v^t -1$. The value represents
the $t$-way interaction that occurs on the corresponding
row and columns of the original array.
For example, the entry on the column $(1,3)$ and row~7
has value 2, meaning that $\{(F_1, 1), (F_3, 0)\}$ occurs on the 7th row
because $2 = 2^1  * 1  + 2^0 * 0$.

The encoding based on the alternative matrix model uses
a total of $N *  \genfrac(){0pt}{1}{k}{t}$ decision variables of integer type, $y_{r, (i_1,...,i_t)}$,
in addition to $x_{r,i}$ representing the array itself.
The domain of $y_{r, (i_1,...,i_t)}$ ranges from~0 to~$v^t -1$.
Each $y_{r, (i_1,...,i_t)}$ represents the interaction
represented by $x_{r,i_1},...,x_{r,i_t}$, namely,
$\{(F_{i_1}, x_{r,i_1}),...., (F_{i_t}, x_{r,i_t})\}$ in the $r$th row.
For example, the interaction $\{(F_{i_1}, x_{r,i_1}), (F_{i_2}, x_{r,i_2})\}$
of strength~2, which can be either of (0, 0), (0, 1), (1, 0), and (1, 1),
is represented by $y_{r, (i_1, i_2)}$, which takes a value 0, 1, 2, and 3.

\begin{figure}[tb]
\centerline{
\begin{tabular}{c|c}
      & 1 \ 2 \ 3 \ 4 \ 5 \ 6 \ 7 \ 8 \ 9 \ 10 \\ \hline
    1 & 0 \ 0 \ 0 \ 0 \ 0 \ 0 \ 0 \ 0 \ 0 \ 0 \\
    2 & 0 \ 0 \ 0 \ 0 \ 0 \ 1 \ 1 \ 1 \ 1 \ 1 \\
    3 & 0 \ 0 \ 1 \ 1 \ 1 \ 0 \ 0 \ 1 \ 1 \ 1 \\
    4 & 0 \ 1 \ 0 \ 1 \ 1 \ 1 \ 1 \ 0 \ 0 \ 1 \\
    5 & 0 \ 1 \ 1 \ 0 \ 1 \ 0 \ 1 \ 0 \ 1 \ 0 \\
    6 & 0 \ 1 \ 1 \ 1 \ 0 \ 1 \ 0 \ 1 \ 0 \ 0 \\
    7 & 1 \ 0 \ 0 \ 1 \ 1 \ 1 \ 0 \ 0 \ 1 \ 0 \\
    8 & 1 \ 0 \ 1 \ 0 \ 1 \ 1 \ 1 \ 1 \ 0 \ 0 \\
    9 & 1 \ 0 \ 1 \ 1 \ 0 \ 0 \ 1 \ 0 \ 0 \ 1 \\
   10 & 1 \ 1 \ 0 \ 0 \ 1 \ 0 \ 0 \ 1 \ 0 \ 1 \\
   11 & 1 \ 1 \ 0 \ 1 \ 0 \ 0 \ 1 \ 1 \ 1 \ 0 \\
\end{tabular}}
\caption{A $11\times 10$ array. (This is the minimum $({\overline 1}, 2)$-locating
array we found using the proposed approach.)}\label{fig:LA2}
\end{figure}

\begin{figure}[tb]
\centerline{
\begin{tabular}{c|cccccccccc}
 & (1,2) & (1,3) & (1,4) & (1,5) & ... & (7,10) & (8,9) & (8,10) & (9,10) \\\hline
 1 & 0 & 0 & 0 & 0 & ... & 0 & 0 & 0 & 0 \\
 2 & 0 & 0 & 0 & 0 & ... & 3 & 3 & 3 & 3 \\
 3 & 0 & 1 & 1 & 1 & ... & 1 & 3 & 3 & 3 \\
 4 & 1 & 0 & 1 & 1 & ... & 3 & 0 & 1 & 1 \\
 5 & 1 & 1 & 0 & 1 & ... & 2 & 1 & 0 & 2 \\
 6 & 1 & 1 & 1 & 0 & ... & 0 & 2 & 2 & 0 \\
 7 & 2 & 2 & 3 & 3 & ... & 0 & 1 & 0 & 2 \\
 8 & 2 & 3 & 2 & 3 & ... & 2 & 2 & 2 & 0 \\
 9 & 2 & 3 & 3 & 2 & ... & 3 & 0 & 1 & 1 \\
10 & 3 & 2 & 2 & 3 & ... & 1 & 2 & 3 & 1 \\
11 & 3 & 2 & 3 & 2 & ... & 2 & 3 & 2 & 2
\end{tabular}}
\caption{Alternative matrix for the array in Fig.~\ref{fig:LA2}}\label{fig:alternativematrix}
\end{figure}

The correspondence between the original array and the alternative
matrix can be established with the following constraints over
the decision variables.
\begin{equation}
\begin{array}{l}
\forall r \in \{1, 2,..., N\}, \\
\forall i_1,...,i_t~\mathrm{such\ that}~1\leq i_1 < i_2 <...<i_t \leq k: \\
\displaystyle \qquad \qquad
y_{r, (i_1,...,i_t)}
 = v^{t-1}*x_{r i_1} + v^{t-2} * x_{ri_2} + ... + v^{0} * x_{ri_t}
\end{array}
\label{eq:channel}
\end{equation}
(In \cite{hnich_constraint2006}, this constraint is called a channeling constraint.)

When Constraint~(\ref{eq:channel}) is imposed, it is possible to replace
Constraints~(\ref{eq:constraint1}) and~(\ref{eq:constraint2}) with
other constraints over $y_{r, (i_1,...,i_t)}$ as shown below.

Constraint~(\ref{eq:constraint1}), which specifies that the array
is $t$-covering, can now be replaced with the following one.
\begin{equation}
\begin{array}{l}
\forall L \in \{0, 1, ..., v^t-1\}, \\
\forall i_1, ..., i_t~\mathrm{such\ that}~1\leq i_1 < ... < i_t \leq k: \\
\displaystyle \qquad \qquad \bigvee_{1\leq r \leq N} (y_{r, (i_1,..., i_t)} = L)
\end{array}\label{eq:constraint1A}
\end{equation}
Constraint~(\ref{eq:constraint2}) can be replaced with:
\begin{equation}
\begin{array}{l}
\forall L, S \in \{0, 1, ..., v^t-1\}, \\
\forall i_1, ..., i_t, h_1, ..., h_t~\mathrm{such\ that}~1 \leq i_1 <...< i_t \leq k, 1 \leq h_1 <...< h_t \leq k: \\
\displaystyle \qquad \qquad
\bigvee_{1\leq r \leq N} (y_{r, (i_1, ..., i_t)} = L) \oplus (y_{r, (h_1, ...., h_t)} = S)
\end{array}
\end{equation}

Note that
$y_{r, (i_1, ..., i_t)} = L$ and $y_{r, (h_1, ..., h_t)} = S$
hold if and only if
($x_{r,{i_1}} = l_1\wedge ... \wedge x_{r,{i_t}} = l_t$)
and
($x_{r,{h_1}} = s_1\wedge ... \wedge x_{r,{h_t}} = s_t$) hold
in Constraint~(\ref{eq:constraint2}).

\subsection{Symmetry Breaking}\label{symmetrybreaking}

Here we introduce a symmetry breaking technique to reduce the solution space of the CSP
while guaranteeing the correctness of the result of the problem.
Usually a locating array has a number of symmetrically isomorphic arrays.
For example, replacing the order of rows or columns in a locating array still yields a
locating array.
The technique presented here introduces additional constraints based on
such symmetric properties.
The additional constraints prevent a solution search from exploring part of the solution space
that can be safely omitted.
We use the same technique proposed by Hnich et al\cite{hnich_constraint2006}., which
was originally used to search for covering arrays.
This technique breaks symmetry by imposing lexicographical order on rows and columns
of the array.

The lexicographical ordering of rows is specified by the following constraint.
\begin{equation}
\begin{array}{l}
\forall r \in \{1, 2, ..., N-1\}:  \\
\displaystyle \quad
\bigvee_{1 \leq i \leq k}\{\{\bigwedge_{1 \leq i' < i} (x_{r,i'} = x_{{r + 1},i'})\} \wedge
 (x_{r,i} < x_{{r + 1},i})\}
\end{array}
\label{eq:lexrow}
\end{equation}
In this constraint, given two consecutive rows
$(x_{r,1}, ... , x_{r,n})$ and $(x_{r+1,1}, ..., x_{r+1,n})$,
the first one is considered to be smaller than the second one for the lexicographical order,
if $x_{r,i}< x_{r+1,i}$ for the first $i$ where $x_{r,i}$ and $x_{r+1,i}$ differ.

Likewise, the lexicographical constraint on the columns are as follows:
\begin{equation}
\begin{array}{l}
\forall i \in \{1, 2, ..., k-1\}:  \\
\displaystyle
\{\bigvee_{1 \leq r \leq N} \{\{\bigwedge_{1 \leq r' < r} (x_{r',i} = x_{r',i+1})\}
  \wedge (x_{r,i} < x_{r,i+1})\} \} \vee \{\bigwedge_{1 \leq r \leq N} (x_{r,i} = x_{r,i+1})\}
\end{array}
\label{eq:lexcolumn}
\end{equation}
The disjunct at the right-hand side,
i.e., $\bigwedge_{1\leq r \leq N} (x_{r,i} = x_{r,i+1})$
is necessary to permit two consecutive columns to be identical.
This is contrast to the lexicographic order constraint on rows, which prohibits
consecutive rows from being the same.
A subtle point is that Constraint~(\ref{eq:lexrow}) imposes
a new condition that all rows must be different, which does not exist in the
definition of the problem defined at the beginning of Section~\ref{SAT}.
However, this does not affect the correctness of the CSP formulation
as far as $N \leq v^k$ holds.
Note that the case $N > v^k$ is of no technical interest,
since the total number of possible tests is $v^k$.


\subsection{Algorithm for finding locating arrays}\label{algorithm}

Here we presents an algorithm for searching for small, hopefully minimum, locating arrays.
The algorithm uses the CSP encodings presented above as its basis.

The outline of the algorithm is as follows. It initially sets the size $N$ of the array, i.e.,
the number of rows, to the known lower bound on the size of the minimum $(\overline{1}, t)$-locating
array.
Then the value of $N$ is repeatedly incremented until a locating array is found.
The problem of deciding if a locating array of size $N$ exists is solved
by having a CSP solver solve the CSP instance encoded using the proposed techniques.

It should be noted that the result of the CSP solving has three possibilities.
If a \emph{satisfying} valuation, that is, a value assignment to the decision variables that satisfies the constraints is found, then the existence of
a $(\overline{1}, t)$-locating array of size $N$ is proved, because
the assignment represents such a locating array.
If the solver proves that the CSP instance is unsatisfiable, i.e.,
no satisfying valuations exist, then it is possible
to conclude the nonexistence of a locating array of that size.
The remaining possibility is that the solver is not able to decide whether
the CSP instance is satisfiable or not.
This could happen because of memory shortage or time out.
Below shows the whole algorithm.


\begin{description}
\item[Given] strength $t$, the number of factors $k$, the number of values $v$
\item[Step~1.] Set
  $N \gets$ known lower bound on the minimum size
  and $isMinimum \gets \mathrm{true}$.  \
\item[Step~2]  Solve the CSP instance encoded with $(N, t, k, v)$.
\item[Step~3-1]  Case 1. [Satisfiable.]
Output the satisfying valuation and the current values of $N$ and $isMinimum$.
Terminate the algorithm.
\item[Step~3-2]  Case 2. [Unsatisfiable.]
  Set $N \gets N + 1$ and go to Step~2.
\item[Step~3-3]  Case 3. [No answer is obtained.]
  Set $isMinimum \gets \mathrm{false}$ and $N \gets N + 1$. Go to Step~2.
\end{description}

The algorithm outputs a satisfying valuation, which represents a $({\overline 1}, t)$-locating
array. The locating array is the smallest one that the algorithm finds in a single run.
The value of $N$ output by the algorithm represents the size of the array obtained.
Note that the locating array may not be guaranteed to be minimum, since the CSP solving
may have failed to solve a CSP instance with some smaller $N$ (Case~3).
Boolean variable $isMinimum$ is set to false if this case happens and the guarantee
is lost. Therefore, the obtained locating array is minimum
only when the output value of $isMinimum$ is true.

In Step~1, the initial value of $N$ is set to a known lower bound on the size of
the minimum $({\overline 1}, t)$-locating array.
The bound could be $v^t$, which trivially holds because all $t$-way interactions
must occur on any $t$ factors.
A tighter bound was proved by Tang et al.~\cite{tang_optimalitylocatingarray_2012}
and therefore we use it for this step.

\section{Experiments}\label{experiment}

This section describes the experiments we conducted.
The experiments consists of two parts.
The first part is intended to answer the question of which encoding scheme works best.
In the second part, the proposed method is applied using the best encoding scheme.
In the experiments, we focus on the case $t=2$, because
\emph{pair-wise testing}, that is, testing of interactions of strength~2,
is the most practiced form of combinatorial interaction testing~\cite{aetg,pict}.

We implemented the proposed method mainly using Scarab~\cite{soh_scarab2013}, which is a tool
that provides a domain-specific language based on Scala programming
language for describing CSP instances. Scarab also embeds Sugar~\cite{Tamura2009} together
with SAT4J as its default CSP solver. Sugar translates a CSP
instance to an instance of the Boolean satisfiability problem (SAT).
SAT4J is a SAT solver~\cite{sat4j}. Sugar and SAT4J are both written in Java. In
addition to the default setting of Scarab, we also used a customized
version of it which can output the intermediate SAT instance. This
customization allowed us to use a faster SAT solver than the default one.


\subsection{Comparison between different encoding schemes}
\label{subsec:exp1}

In the first experiment, we compare a total of four encoding schemes as follows.
\begin{itemize}
\item Na\"{i}ve
\item Na\"{i}ve + Symmetry breaking
\item Alternative
\item Alternative + Symmetry breaking
\end{itemize}
Using these encoding schemes, we ran the proposed method within the
parameter ranges as follows:
$3\leq k \leq 12, v = 2$ and $3\leq k \leq 8, v = 3$.
The timeout was set to one hour.
It should be noted that even for the same values of $N, k, v$,
it was the case that a locating array was obtained
with some encoding scheme but a timeout occurred with another scheme.
The increment of $N$ was stopped when a locating array had been obtained
using at least one encoding scheme.
As a result, a total of 34 CSP instances were tested for each encoding scheme.
\begin{figure}[t]
\centerline{\includegraphics[width=0.75\textwidth]{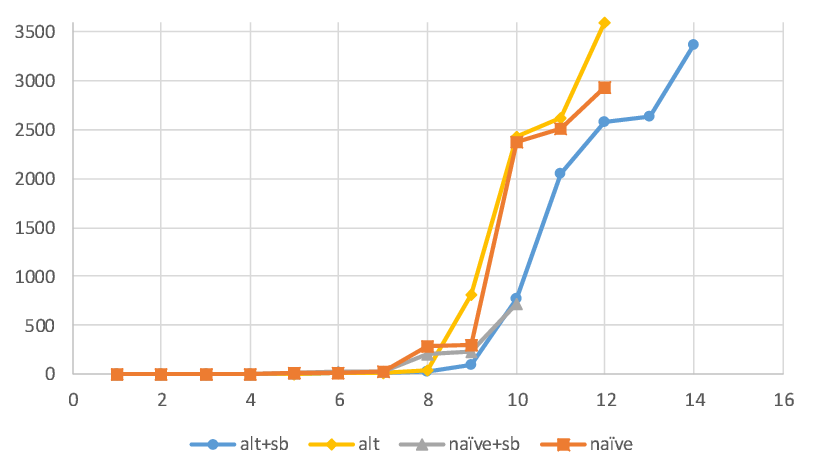}}
\caption{Comparison of the four encodings.
The horizontal axis shows the number of CSP instances solved within
one hour, while the vertical axis shows running time.
The encoding
based on the alternative matrix model and symmetry breaking (alt+sb),
 which showed best performance, solved 14 out of 34 CSP instances.}
\label{fig:encodings}
\end{figure}

We carried out this experiment on a Mac High Sierra 10.13.3 Laptop equipped with
a Core i5 CPU (1.8~GHz) and 8 GByte memory.
Figure~\ref{fig:encodings} shows the results of the experiments
in the form of cactus plot.
The curves show the runtime needed to solve a certain number of problem instances.
That is, a point $(x, y)$ on the curve represents that
$x$ problem instances required at most $y$ seconds to be solved, whereas the remaining
$34-x$ instances could not be solved within $y$ seconds.


As can be seen from the figure, the encoding based on
the alternative matrix model with symmetry breaking
exhibited the best performance among the four different encoding schemes.
When using this encoding scheme, 14 out of the 34 CSP instances were solved within
the one hour limit.
The other encoding schemes solved less instances with longer running times.
All the instances solved by these encodings turned out to be satisfiable.

\subsection{Finding minimum locating arrays}\label{experiment:SAT}

\begin{table}[t]
\caption{Results for various $N$ ($v = 2$).
S: Satisfiable, U: Unsatisfiable, T: Timeout (12 hours)\label{tbl:v2}}
{\small
\begin{tabular}{lllllllllllllllll} \hline
       & 6 & 7 & 8 & 9 & 10 & 11 & 12 & 13 & 14 & 15 & 16 & 17 & 18 & 19 \\ \hline
$2^3$  & S &   &   &   &    &    &    &    &    &    &    &    &    &            \\
$2^4$  &   & S &   &   &    &    &    &    &    &    &    &    &    &            \\
$2^5$  &   &   & S &   &    &    &    &    &    &    &    &    &    &            \\
$2^6$  &   &   &   & S &    &    &    &    &    &    &    &    &    &            \\
$2^7$  &   &   &   &   & S  &    &    &    &    &    &    &    &    &            \\
$2^8$  &   &   &   &   & U  & S  &    &    &    &    &    &    &    &           \\
$2^9$  &   &   &   &   &    & S  &    &    &    &    &    &    &    &            \\
$2^{10}$ &   &   &   &   &    & S  &    &    &    &    &    &    &    &    \\
$2^{11}$ &   &   &   &   &    & S  &    &    &    &    &    &    &    &    \\
$2^{12}$ &   &   &   &   &    & U  & S  &    &    &    &    &    &    &    \\
$2^{13}$ &   &   &   &   &    &    & T  & T  & S  &    &    &    &    &    \\
$2^{14}$ &   &   &   &   &    &    & T  & T  & T  & S  &    &    &    &    \\
$2^{15}$ &   &   &   &   &    &    & T  & T  & T  & S  &    &    &    &    \\
$2^{16}$ &   &   &   &   &    &    & T  & T  & T  & S  &    &    &    &    \\
$2^{17}$ &   &   &   &   &    &    & T  & T  & T  & T  & S  &    &    &    \\
$2^{18}$ &   &   &   &   &    &    & T  & T  & T  & T  & S  &    &    &    \\
$2^{19}$ &   &   &   &   &    &    & T  & T  & T  & T  & T  & S  &    &    \\
$2^{20}$ &   &   &   &   &    &    & T  & T  & T  & T  & T  & S  &    &    \\
$2^{21}$ &   &   &   &   &    &    & T  & T  & T  & T  & T  & T  & S  &    \\
$2^{22}$ &   &   &   &   &    &    & T  & T  & T  & T  & T  & T  & S  &    \\
$2^{23}$ &   &   &   &   &    &    & T  & T  & T  & T  & T  & T  & T  & S  \\
\hline
\end{tabular}
}
\end{table}

\begin{table}[t]
\caption{Results for various $N$ ($v = 3$).
S: Satisfiable, U: Unsatisfiable, T: Timeout (12 hours)\label{tbl:v3}}
{\small
\begin{tabular}{lllllllllllllll} \hline
   & 14 & 15 & 16 & 17 & 18 & 19 & 20 & 21 & 22 & 23 & 24 & 25 & 26 & 27 \\ \hline
$3^{3}$ & U  & S  &    &    &    &    &    &    &    &    &    &    &    &    \\
$3^{4}$ &    &    & S  &    &    &    &    &    &    &    &    &    &    &    \\
$3^{5}$ &    &    &    & S  &    &    &    &    &    &    &    &    &    &    \\
$3^{6}$ &    &    &    & S  &    &    &    &    &    &    &    &    &    &    \\
$3^{7}$ &    &    &    &    & T  & T  & T  & T  & T  & S  &    &    &    &    \\
$3^{8}$ &    &    &    &    &    &    & T  & T  & T  & T  & T  & S  &    &    \\
$3^{9}$ &    &    &    &    &    &    &    & T  & T  & T  & T  & T  & T  & S  \\ \hline
\\ \hline
         & 23 & 24 & 25 & 26 & 27 & 28 & 29 & 30 & 31 & 32 & 33 & 34 & 35 & 36  \\
\hline
$3^{10}$ & T  & T  & T  & T  & T  & T  & S  &    &    &    &    &    &    &   \\
$3^{11}$ & T  & T  & T  & T  & T  & T  & T  & T  & T  & S  &    &    &    &   \\
$3^{12}$ &    & T  & T  & T  & T  & T  & T  & T  & T  & T  & T  & S  &    &   \\
$3^{13}$ &    &    & T  & T  & T  & T  & T  & T  & T  & T  & T  & T  & T  & S \\
\hline
\end{tabular}}


\end{table}

\begin{table}[t]
\caption{Sizes of $({\overline 1}, 2)$-locating arrays obtained.
The column \textit{size} represents the number of rows, 
\textit{time} represents the time taken by Glucose to find the
satisfying valuation that represents the locating array, 
$\geq$ shows the known lower bound on the size~\cite{tang_optimalitylocatingarray_2012}, and 
\textit{minimum?} indicates whether the array
is the minimum or not guaranteed to be so at present.
\label{tbl:summary}}
{\small
\begin{tabular}{lrrrc} \hline
         & size & time    & $\geq$         & minimum? \\ \hline
$2^{3}$  & 6    & 0.02    & 6             & Yes     \\
$2^{4}$  & 7    & 0.07    & 7             & Yes     \\
$2^{5}$  & 8    & 0.2     & 8             & Yes       \\
$2^{6}$  & 9    & 0.6     & 9             & Yes       \\
$2^{7}$  & 10   & 1.3     & 10            & Yes       \\
$2^{8}$  & 11   & 2       & 10            & Yes     \\
$2^{9}$  & 11   & 122.5   & 11            & Yes       \\
$2^{10}$ & 11   & 476.6   & 11            & Yes       \\
$2^{11}$ & 11   & 709.9   & 11            & Yes       \\
$2^{12}$ & 12   & 6977.7  & 11            & Yes     \\
$2^{13}$ & 14   & 859.8   & 12            & ?       \\
$2^{14}$ & 15   & 586.2   & 12            & ?       \\
$2^{15}$ & 15   & 1835.6  & 12            & ?       \\
$2^{16}$ & 15   & 29793.7 & 12            & ?       \\
$2^{17}$ & 16   & 10543.8 & 12            & ?       \\
$2^{18}$ & 16   & 15509.7 & 12            & ?       \\
$2^{19}$ & 17   & 10312.7 & 12            & ?       \\
$2^{20}$ & 17   & 29147.9 & 12            & ?       \\
$2^{21}$ & 18   & 9699    & 12            & ?       \\
$2^{22}$ & 18   & 8869.2  & 12            & ?       \\
$2^{23}$ & 19   & 13557.9 & 12            & ?      \\ \hline
\end{tabular}\quad \quad
\begin{tabular}{lrrrc} \hline
         & size & time    & $\geq$         & minimum? \\ \hline
$3^{3}$  & 15   & 0.1     & 14            & Yes     \\
$3^{4}$  & 16   & 1       & 16            & Yes       \\
$3^{5}$  & 17   & 2503.8  & 17            & Yes       \\
$3^{6}$  & 17   & 270     & 17            & Yes       \\
$3^{7}$  & 23   & 3918.7  & 18            & ?       \\
$3^{8}$  & 25   & 10581.3 & 20            & ?       \\
$3^{9}$  & 27   & 3622.8  & 21            & ?       \\
$3^{10}$ & 28   & 24627.1 & 22            & ?       \\
$3^{11}$ & 31   & 1579    & 22            & ?       \\
$3^{12}$ & 33   & 9479.4  & 23            & ?       \\
$3^{13}$ & 35   & 15229.3 & 24            & ?      \\ \hline
\\
\\
\\
\\
\\
\\
\\
\\
\\
\\
\end{tabular}
}
\end{table}

The results shown in Section~\ref{subsec:exp1} suggest that the alternative matrix model
with the symmetry breaking technique works best among the four encoding schemes.
Using the best encoding we searched for small $(\overline{1}, 2)$-locating arrays within a larger range of the parameters.
This experiment was conducted on a faster PC equipped with a 2.40GHz Intel Xeon E5-2665 CPU
and 128 Gbyte memory. The OS was CentOS~7.2.
To find smaller arrays, we changed the way of solving CSP instances as follows.
In the first experiment, we had used SAT4J, which is embedded in Scarab, to solve the SAT instances representing the CSP instances.
In the second experiment, we first produced SAT instances using a customized version of Scarab.
We then solve the SAT instances using Glucose~\cite{audemard_predicting2009}, a faster SAT solver.
Glucose can make an explicit use of parallelism in a multicore CPU.
In our case, we ran the solver with 16 threads.
We also enlarged the timeout period from one hour to 12 hours.
The timeout was enforced on the execution of Glucose; no timeout was used for the execution of
Scarab.

Tables~\ref{tbl:v2} and~\ref{tbl:v3} present how the proposed
algorithm ran for each problem instance. The leftmost column shows the
problem instances in the form of $v^k$, where $k$ is the number of
factors and $v$ is the number of levels for each factor. Each of the
remaining columns corresponds to a value of $N$ which is the size of an
array represented by a CSP instance. The value of $N$ is indicated in
the topmost row. A non-empty entry in the table has either of S, U
or T, which stands for Satisfiable, Unsatisfiable, or Timeout,
respectively. The execution of the proposed algorithm
exhibited three different patterns.
The first pattern applies to small instances including
 $2^{3}$, ..., $2^{7}$, $2^{9}$, ..., $2^{11}$,
$3^{4}$, ..., $3^{6}$.
In this case, the CSP instance was turned to be satisfiable
when $N$ is the lower bound by Tang et al~\cite{tang_optimalitylocatingarray_2012}.
Hence the locating arrays found are minimum.
The second pattern applies to $2^{8}$, $2^{12}$ and $3^{3}$, where
the CSP instances were unsatisfiable when $N$ was the lower bound
and satisfiable when $N$ was increased by one.
The arrays obtained for these problem instances are also minimum,
because the nonexistence of smaller arrays was proved.
For the remaining problem instances, the SAT solver had timed out for
all $N$ until a locating array was found. Therefore it is not possible
to conclude that the obtained locating arrays are minimum for now.
To our knowledge, however, the existence of all these arrays has not been known,
except the one for $2^{3}$, which was already shown in
\cite{tang_optimalitylocatingarray_2012}.

Table~\ref{tbl:summary} summarizes the size of
the locating arrays obtained.
The column \textit{size} represents the size, i.e., the number of rows of
the $({\overline 1}, 2)$-locating array found for each problem.
The column \textit{time} represents the time taken by Glucose to find the
satisfying valuation that represents the locating array.
The column $\geq$ shows the known lower bound on the size of the minimum locating array~\cite{tang_optimalitylocatingarray_2012}.
The column \textit{minimum?} indicates whether the obtained array
is the minimum or not guaranteed to be so at present.

\section{Related works}\label{relatedresearch}

The notion of locating arrays was proposed by Colbourn and McClary in~\cite{colbourn_locatingarray2008}.
Concrete applications of locating arrays are discussed
in~\cite{Aldaco:2015,colbourn_locatingarray2008,7562157,7528941}.
At present there is not much work on locating arrays.
Mathematical properties of locating arrays are studied in~\cite{Spread2016,Shi2012,tang_optimalitylocatingarray_2012}.
The minimum sizes of $(1, 1)$-, $({\overline 1}, 1)$-, $(1, {\overline 1})$-,
and $({\overline 1}, {\overline 1})$-locating
arrays are proved in \cite{Spread2016}.
The minimum $(\overline{2}, t)$-locating arrays for a few special cases
are provided in \cite{Shi2012}.
A lower bound on the minimum size of $(1, t)$-locating arrays is provided
in~\cite{tang_optimalitylocatingarray_2012}.
The optimality of some $(1, t)$-locating arrays that have few factors
(three to five) is also proved in~\cite{tang_optimalitylocatingarray_2012}.
Three ``cut-and-paste'' type constructions for
$(1, \overline{2}$)-locating arrays are proposed in~\cite{colbourn2016}.
Studies on computational constructions include 
\cite{konishi_LASAT2017,nagamoto_locatingpairwisetesting2014,10.1007/978-3-319-94667-2_29,Colbourn2018}. 
The two-page paper~\cite{konishi_LASAT2017} reported early results
of the work presented in this paper.
In \cite{nagamoto_locatingpairwisetesting2014} we proposed a 
``one-test-at-a-time'' greedy heuristic
algorithm for constructing $(1, 2)$-locating arrays.
The use of \textit{resampling}~\cite{Moser:2010:CPG:1667053.1667060} for locating 
array construction is studied in \cite{10.1007/978-3-319-94667-2_29,Colbourn2018}. 
The greedy heuristic algorithm and resampling techniques 
cannot be used to obtain lower bounds on the sizes of locating arrays. 
In~\cite{2018arXiv180106041J} we formulate locating arrays in the presence of constraints are formulated as \textit{constrained locating arrays}. 
Computational constructions for this variant of locating arrays 
can be found in~\cite{Jin2018,8639696}.


In contrast, covering arrays have been extensively studied.
Solvers for the Boolean satisfiability problem (SAT) are often used to construct
covering array. Research in this line
 includes~\cite{hnich_constraint2006,Banbara:2010,nanba_satsolving2012,DBLP:conf/icst/YamadaKACOB15}.
There are a number of other techniques for constructing
covering arrays, including algebraic constructions~\cite{Colbourn2005}, greedy heuristic algorithms~\cite{pict,boraz_ACTS2012,cohen_AETG1997},
metaheuristic search algorithms~\cite{walker_tabusearch2009,986326,Shiba:2004}.
Unlike these methods, the SAT-based methods have a useful feature that
they can be used to prove the non-existence of covering arrays of a given size.
The approach proposed in this paper can be viewed as an adaptation of
these SAT-based covering array constructions to locating arrays.

\section{Conclusion}\label{conclusion}

In this paper, we proposed an approach to finding small $({\overline 1},
t)$-locating arrays. We represented the problem of finding a locating
array of size $N$ as an instance of the Constraint Satisfaction Problem
(CSP). The proposed approach repeatedly solves this problem using a CSP
solver, gradually increasing $N$. We were successful in obtaining
minimum $({\overline 1}, t)$-locating arrays for some cases where the
number of factors is small. Also for other cases, we were able to find
locating arrays that are the smallest known so far.


\begin{thebibliography}{10}
\providecommand{\url}[1]{\texttt{#1}}
\providecommand{\urlprefix}{URL }
\expandafter\ifx\csname urlstyle\endcsname\relax
  \providecommand{\doi}[1]{doi:\discretionary{}{}{}#1}\else
  \providecommand{\doi}{doi:\discretionary{}{}{}\begingroup
  \urlstyle{rm}\Url}\fi
\providecommand{\eprint}[2][]{\url{#2}}

\bibitem{kuhn_introductionCT2013}
Kuhn DR, Kacker RN, Lei Y.
\newblock Introduction to combinatorial testing.
\newblock CRC Press, 2013.

\bibitem{nie_surveyCT2011}
Nie C, Leung H.
\newblock A Survey of Combinatorial Testing.
\newblock \emph{ACM Comput. Surv.}, 2011.
\newblock \textbf{43}(2):11:1--11:29.
\newblock \doi{10.1145/1883612.1883618}.
\newblock
  \urlprefix\url{http://doi.acm.org.remote.library.osaka-u.ac.jp/10.1145/1883612.1883618}.

\bibitem{kuhn_faultinteractions2004}
Kuhn DR, Wallace DR.
\newblock Software fault interactions and implications for software testing.
\newblock \emph{IEEE Transactions on Software Engineering}, 2004.
\newblock \textbf{30}(6):418--421.

\bibitem{colbourn_locatingarray2008}
Colbourn CJ, McClary DW.
\newblock Locating and detecting arrays for interaction faults.
\newblock \emph{Journal of combinatorial optimization}, 2008.
\newblock \textbf{15}:17--48.

\bibitem{hnich_constraint2006}
Hnich B, D~Prestwich S, Selensky E, M~Smith B.
\newblock Constraint Models for the Covering Test Problem.
\newblock \emph{Constraints (2006)}, 2006.
\newblock \textbf{11}:199--219.

\bibitem{tang_optimalitylocatingarray_2012}
Tang Y, Colbourn CJ, Yin J.
\newblock Optimality and Constructions of Locating Arrays.
\newblock \emph{Journal of Statistical Theory and Practice}, 2012.
\newblock \textbf{6}:20--29.

\bibitem{aetg}
Fredman ML, Patton GC, Dalal SR, Cohen DM.
\newblock The {AETG} System: {An} Approach to Testing Based on Combinatorial
  Design.
\newblock \emph{IEEE Transactions on Software Engineering}, 1997.
\newblock \textbf{23}(7):437--444.
\newblock \doi{10.1109/32.605761}.
\newblock \urlprefix\url{doi.ieeecomputersociety.org/10.1109/32.605761}.

\bibitem{pict}
Czerwonka J.
\newblock Pairwise Testing in Real World. Practical Extensions to Test Case
  Generators.
\newblock In: Proc. of the 24th Annual Pacific Northwest Software Quality
  Conference. 2006 pp. 419--430.

\bibitem{soh_scarab2013}
Soh T, Tamura N, Banbara M.
\newblock Scarab: A Rapid Prototyping Tool for SAT-Based Constraint Programming
  Systems.
\newblock \emph{Theory and Applications of Satisfiability Testing - SAT 2013
  Volume 7962 of the series Lecture Notes in Computer Science}, 2013.
\newblock pp. 429--436.

\bibitem{Tamura2009}
Tamura N, Taga A, Kitagawa S, Banbara M.
\newblock Compiling finite linear {CSP} into {SAT}.
\newblock \emph{Constraints}, 2009.
\newblock \textbf{14}(2):254--272.

\bibitem{sat4j}
Berre DL, Parrain A.
\newblock The {Sat4j} library, release 2.2.
\newblock \emph{Journal on Satisfiability, Boolean Modeling and Computation},
  2010.
\newblock \textbf{7}:59--64.

\bibitem{audemard_predicting2009}
Audemard G, Simon L.
\newblock Predicting learnt clauses quality in modern sat solvers.
\newblock \emph{Proc. 21st International Jont Conference on Artifical
  Intelligence. ser. IJCAL'09}, 2009.
\newblock pp. 399--404.

\bibitem{Aldaco:2015}
Aldaco AN, Colbourn CJ, Syrotiuk VR.
\newblock Locating Arrays: {A} New Experimental Design for Screening Complex
  Engineered Systems.
\newblock \emph{SIGOPS Oper. Syst. Rev.}, 2015.
\newblock \textbf{49}(1):31--40.
\newblock \doi{10.1145/2723872.2723878}.

\bibitem{7562157}
{Compton} R, {Mehari} MT, {Colbourn} CJ, {De Poorter} E, {Syrotiuk} VR.
\newblock Screening interacting factors in a wireless network testbed using
  locating arrays.
\newblock In: 2016 IEEE Conference on Computer Communications Workshops
  (INFOCOM WKSHPS). 2016 pp. 650--655.
\newblock \doi{10.1109/INFCOMW.2016.7562157}.

\bibitem{7528941}
{Colbourn} CJ, {Syrotiuk} VR.
\newblock Coverage, Location, Detection, and Measurement.
\newblock In: 2016 IEEE Ninth International Conference on Software Testing,
  Verification and Validation Workshops (ICSTW). 2016 pp. 19--25.
\newblock \doi{10.1109/ICSTW.2016.38}.

\bibitem{Spread2016}
Colbourn CJ, Fan B, Horsley D.
\newblock Disjoint Spread Systems and Fault Location.
\newblock \emph{SIAM Journal on Discrete Mathematics}, 2016.
\newblock \textbf{30}(4):2011--2026.
\newblock \doi{10.1137/16M1056390}.

\bibitem{Shi2012}
Shi C, Tang Y, Yin J.
\newblock Optimal locating arrays for at most two faults.
\newblock \emph{Science China Mathematics}, 2012.
\newblock \textbf{55}(1):197--206.
\newblock \doi{10.1007/s11425-011-4307-5}.

\bibitem{colbourn2016}
Colbourn CJ, Fan B.
\newblock Locating one pairwise interaction: {Three} recursive constructions.
\newblock \emph{Journal of Algebra Combinatorics Discrete Structures and
  Applications}, 2016.
\newblock \textbf{3}(3):127--134.

\bibitem{konishi_LASAT2017}
Konishi T, Kojima H, Nakagawa H, Tsuchiya T.
\newblock Finding minimum locating arrays using a SAT solver.
\newblock \emph{IEEE International Conference on Software Testing, Verification
  and Validation Workshops (ICSTW)}, 2017.
\newblock pp. 276--277.

\bibitem{nagamoto_locatingpairwisetesting2014}
Nagamoto T, Kojima H, Nakagawa H, Tsuchiya T.
\newblock Locating a Faulty Interaction in Pair-wise Testing.
\newblock In: Proc. of IEEE 20th Pacific Rim International Symposium on
  Dependable Computing. 2014 pp. 155--156.

\bibitem{10.1007/978-3-319-94667-2_29}
Seidel SA, Sarkar K, Colbourn CJ, Syrotiuk VR.
\newblock Separating Interaction Effects Using Locating and Detecting Arrays.
\newblock In: Iliopoulos C, Leong HW, Sung WK (eds.), Combinatorial Algorithms.
  Springer International Publishing, Cham.
\newblock ISBN 978-3-319-94667-2, 2018 pp. 349--360.

\bibitem{Colbourn2018}
Colbourn CJ, Syrotiuk VR.
\newblock On a Combinatorial Framework for Fault Characterization.
\newblock \emph{Mathematics in Computer Science}, 2018.
\newblock \textbf{12}(4):429--451.
\newblock \doi{10.1007/s11786-018-0385-x}.
\newblock \urlprefix\url{https://doi.org/10.1007/s11786-018-0385-x}.

\bibitem{Moser:2010:CPG:1667053.1667060}
Moser RA, Tardos G.
\newblock A Constructive Proof of the General Lov\'{A}Sz Local Lemma.
\newblock \emph{J. ACM}, 2010.
\newblock \textbf{57}(2):11:1--11:15.
\newblock \doi{10.1145/1667053.1667060}.
\newblock \urlprefix\url{http://doi.acm.org/10.1145/1667053.1667060}.

\bibitem{2018arXiv180106041J}
{Jin} H, {Tsuchiya} T.
\newblock {Constrained locating arrays for combinatorial interaction testing}.
\newblock \emph{arXiv e-prints}, 2017.
\newblock arXiv:1801.06041.
\newblock \eprint{1801.06041}.

\bibitem{Jin2018}
Jin H, Kitamura T, Choi EH, Tsuchiya T.
\newblock A Satisfiability-Based Approach to Generation of Constrained Locating
  Arrays.
\newblock In: 2018 IEEE International Conference on Software Testing,
  Verification and Validation Workshops. 2018 pp. 285--294.
\newblock \doi{10.1109/ICSTW.2018.00062}.

\bibitem{8639696}
{Jin} H, {Tsuchiya} T.
\newblock Deriving Fault Locating Test Cases from Constrained Covering Arrays.
\newblock In: 2018 IEEE 23rd Pacific Rim International Symposium on Dependable
  Computing (PRDC). 2018 pp. 233--240.
\newblock \doi{10.1109/PRDC.2018.00044}.

\bibitem{Banbara:2010}
Banbara M, Matsunaka H, Tamura N, Inoue K.
\newblock Generating combinatorial test cases by efficient SAT encodings
  suitable for CDCL SAT solvers.
\newblock In: Proc. of the 17th international conference on Logic for
  programming, artificial intelligence, and reasoning, LPAR'10.
  Springer-Verlag, Berlin, Heidelberg.
\newblock ISBN 3-642-16241-X, 978-3-642-16241-1, 2010 pp. 112--126.

\bibitem{nanba_satsolving2012}
Nanba T, Tsuchiya T, Kikuno T.
\newblock Using Satisfiability Solving for Pairwise Testing in the Presence of
  Constraints.
\newblock \emph{IEICE Transactions on Fundamentals of Electronics,
  Communications and Computer Sciences}, 2012.
\newblock \textbf{E95.A}(9):1501--1505.
\newblock \doi{10.1587/transfun.E95.A.1501}.

\bibitem{DBLP:conf/icst/YamadaKACOB15}
Yamada A, Kitamura T, Artho C, Choi E, Oiwa Y, Biere A.
\newblock Optimization of Combinatorial Testing by Incremental {SAT} Solving.
\newblock In: Proc. of the 8th {IEEE} International Conference on Software
  Testing, Verification and Validation, {ICST} 2015. 2015 pp. 1--10.
\newblock \doi{10.1109/ICST.2015.7102599}.
\newblock \urlprefix\url{https://doi.org/10.1109/ICST.2015.7102599}.

\bibitem{Colbourn2005}
Colbourn CJ.
\newblock Combinatorial Aspects of Covering Arrays.
\newblock \emph{Le Matematiche}, 2004.
\newblock \textbf{58}:121--167.

\bibitem{boraz_ACTS2012}
Borazjany MN, Yu L, Lei Y, Kacker R, Kuhn R.
\newblock Combinatorial Testing of {ACTS:} {A} Case Study.
\newblock In: Proc. of the 5th International Conference on Software Testing,
  Verification and Validation. 2012 pp. 591--600.
\newblock \doi{10.1109/ICST.2012.146}.
\newblock \urlprefix\url{http://dx.doi.org/10.1109/ICST.2012.146}.

\bibitem{cohen_AETG1997}
Cohen DM, Dalal SR, Fredman ML, Patton GC.
\newblock The {AETG} System: An Approach to Testing Based on Combinatiorial
  Design.
\newblock \emph{IEEE Trans. Software Eng.}, 1997.
\newblock \textbf{23}(7):437--444.

\bibitem{walker_tabusearch2009}
{Walker II} RA, Colbourn C.
\newblock Tabu search for covering arrays using permutation vectors.
\newblock \emph{Journal of Statistical Planning and Inference 139}, 2009.
\newblock \textbf{139}(1):69--80.

\bibitem{986326}
Nurmela KJ.
\newblock Upper bounds for covering arrays by tabu search.
\newblock \emph{Discrete Appl. Math.}, 2004.
\newblock \textbf{138}(1-2):143--152.
\newblock \doi{http://dx.doi.org/10.1016/S0166-218X(03)00291-9}.

\bibitem{Shiba:2004}
Shiba T, Tsuchiya T, Kikuno T.
\newblock Using artificial life techniques to generate test cases for
  combinatorial testing.
\newblock In: Proc. of 28th Annual International Computer Software and
  Applications Conference (COMPSAC '04). 2004 pp. 71--77.

\end{thebibliography}

\end{document}